 \newtheorem{thm}{Theorem}[section]
 \newtheorem{cor}[thm]{Corollary}
 \newtheorem{lem}[thm]{Lemma}
 \theoremstyle{definition}
 \newtheorem{defn}[thm]{Definition}
 \theoremstyle{remark}
 \newtheorem{rem}[thm]{Remark}
  \newtheorem{assm}[thm]{Assumption}
 \numberwithin{equation}{section}
\newcommand{\I}{\mathbb{I}}
\newcommand{\Z}{\mathbb{Z}}
\newcommand{\R}{\mathbb{R}}
\newcommand{\C}{\mathbb{C}}
\renewcommand{\I}{\mathbb{I}}
\newcommand{\N}{\mathbb{N}}
\newcommand{\Lam}{\Lambda_{N}(A)}
\begin{document}

%
%
%
%
%
%
%
%
%

\title[Exponential decay property for quantum walks]
 {Exponential decay property for eigenfunctions of quantum walks}

\author[Kazuyuki Wada]{Kazuyuki Wada}

\address{%
Department of Mathematics,
Hokkaido University of Education\\
9 cho-me, Hokumon-cho \\
Asahikawa, Hokkaido 070-8621, Japan
}

\email{wada.kazuyuki@a.hokkyodai.ac.jp}

\subjclass{Primary 81Q35; Secondary 47B02, 47B15, 47B93}

\keywords{Eigenfunction, Eigenvalue, Exponential decay, Quantum walk, Unitary operator}

\date{May 20, 2024}

\begin{abstract}
Under an abstract setting, we show that eigenvectors belong to discrete spectra of unitary operators have exponential decay properties. We apply the main theorem to multi-dimensional quantum walks and show that eigenfunctions belong to a discrete spectrum decay exponentially at infinity.
\end{abstract}

\maketitle
\section{Introduction}
Exponential decay property (EDP) at infinity is one of the characteristic properties of eigenfunctions associated with Schr\"{o}dinger operators. Earlier works on EDP are discussed by \v{S}nol'. In \cite{Sn}, he discussed the asymptotic behavior at infinity for eigenfunctions belong to discrete spectra. Moreover, it was clarified that there is a relation between the spectral gap and decay rate at infinity. O'Connor, Combes-Thomas, and Agmon considered EDP for $N-$body Schr\"{o}dinger operators. O'Connor showed EDP for pair potentials belonging to Rollnik class plus $L_{\epsilon}^{\infty}$ class \cite{O}. Combes and Thomas showed it for pair potentials which are analytic for the subgroup of linear transformation groups \cite{CT}. Agmon showed it by application of operator positivity methods \cite{A}. For other works on EDP, we refer Froese-Herbst \cite{FH}, Griesemer \cite{G1}, Nakamura \cite{N}, Bach-Matte \cite{BM}, Yafaev \cite{Y} and Kawamoto \cite{K}. We can also derive EDP from an application of the Feynman-Kac type formula. It is known that semigroups generated by a class of Schr\"{o}dinger operators can be represented by stochstic processes. In particular, martingale properties are crucial to deriving EDP. In this direction, we refer \cite{CMS, HIL, HL, LHB} and references therein. EDP also appears in the context of quantum field theory \cite{G2, H1, H2}. Besides, this property is not only shown but also applied to show the existence of ground states in non-relativistic quantum electrodynamics \cite{GLL, HHS}.

In this paper, we consider EDP for a class of unitary operators. Let $U$ be a unitary operator and $A$ be a non-negative self-adjoint operator on a Hilbert space $\mathcal{H}.$ We suppose that the discrete spectrum of $U$ is not empty. The purpose of this paper is to show
\begin{align}\label{eq:exp}
e^{\delta A}\psi\in\mathcal{H},
\end{align}
for any eigenvector $\psi$ belongs to the discrete spectrum and any sufficiently small $\delta>0$. In this case, we say that $\psi$ has EDP for $A$. As we see below, the range of $\delta$ is closely related to the distance between the essential spectrum of $U$ and the discrete eigenvalue which $\psi$ belongs to. A typical example of a non-negative self-adjoint operator $A$ in our mind is the modules of the position operator.

A motivation we consider EDP for unitary operators comes from quantum walks which are often regarded as a quantum counterpart of random walks \cite{Gu, LCETK, MSS}. From the viewpoint of partial differential equations, quantum walks are space-time discretized Dirac equations \cite{MS}. It is well known that some properties of quantum walks are quite different from that of random walks. In particular, the ballistic transportation and the localization occur in quantum walks \cite{K1, K2}. Related to these properties, mathematical analysis is developed from a viewpoint of weak limit theorem \cite{FFS3, S, RST2}, spectral theory\cite{MSSSS, MSW, RST1}, and references therein as examples. 

In the context of quantum walks, results on the existence of discrete spectra are known \cite{KS, M}. In particular, the explicit optimal decay rate is derived. In particular, in nonlinear quantum walks, EDP is applied to obtain the asymptotic stability \cite{M}. However, these references are limited in one dimension. In the one-dimensional case, we can introduce the transfer matrix which is a powerful tool for solving eigenvalue problems and analyzing various quantities. Although, in multi-dimensional cases, the existence of a discrete spectrum is reported in \cite{FFS2, FFSS}, detailed properties of eigenfunctions are not well known. In particular, it is not known whether eigenfunctions have EDP, yet. Motivated by these situations, we show EDP for a class of quantum walks involving multi-dimensional cases.

First, we establish \eqref{eq:exp} under a general setting in Section 2. Since we treat exponential operators of unbounded operators, we have to introduce suitable cut-off functions to avoid domain problems. For the proof, we mainly follow the methods presented by Yafaev \cite{Y} concerned the first-order differential systems involving Dirac operators. In our case, the derivative of functions are replaced by commutators. To analyze commutators is the crucial part.

In proofs, instead of $A$, we introduce another operator $\Lambda(A)$ which is step-like and approximates $A$ from above (see \eqref{Def:lambda}). In the function space, differential operators and multiplication operators act locally on configuration spaces. From this observation, in addition to introducing $\Lambda(A),$ it may be suitable to assume some locality conditions in $U$. Therefore, in this paper, we impose \lq\lq finite propagation" condition (see Assumption \ref{Ass:main}) for $U$. By these two ideas, we can analyze the commutator in detail.

The optimal constant $\delta$ in \eqref{eq:exp} depends on dispersion relations of quantum walks. For example, in \cite{KS, M}, the optimal constant is derived. However, in quantum walks, we can select graphs, internal degrees of freedom, motion of a quantum walker, and shift parameters. Thus, it would be useful to establish EDP in general settings. For example, in \cite{T}, Tiedra de Aldecoa considered spectral and scattering theory for quantum walks on not square lattices but trees. If discrete spectra of such quantum walks are not empty, we can apply our results. Our idea can be applied to discrete Schr\"{o}dinger operators since they consist of shift operators and multiplication operators that act locally.

As an application, in Section 3, we apply the results for multi-dimensional quantum walks with a defect. Then, we can show that eigenfunctions associated with discrete spectrum possess EDP.
\section{Set up and main result}
Let $\mathcal{H}$ be the separable Hilbert space over $\mathbb{C}$. The symbol $\langle\cdot, \cdot\rangle$ and $\|\cdot\|$ denotes the inner product and the norm over $\mathcal{H}$, respectively. Let $U$ be a unitary operator on $\mathcal{H}$. Symbols $\sigma(U)$, $\sigma_{\mathrm{ess}}(U)$ and $\sigma_{\mathrm{d}}(U)$ denote the spectrum of $U$, the essential spectrum of $U$ and the discrete spectrum of $U$, respectively. First, we introduce the following notion:
\begin{defn}\label{Def:propagate}
Let $S$ be a self-adjoint operator on $\mathcal{H}.$ We denote the spectral measure of $S$ by $E_{S}(\cdot).$ We say that $U$ finitely propagates with respect to $S$ if there exists a constant $b>0$ such that for any $\psi\in \mathrm{Ran}E_{S}([R_{1}, R_{2}))$ with $R_{1}< R_{2},$ $U\psi\in\mathrm{Ran}E_{S}([R_{1}-b, R_{2}+b)).$
\end{defn}
\begin{rem}
In Definition \ref{Def:propagate}, we introduced the notion of finite propagation for half-open intervals. Of course, we can also define the notion of the finite propagation by open intervals and closed intervals. However, we only consider half-open intervals to cover $[0, \infty)$ by disjoint intervals.
\end{rem}
 We impose the following assumption:
\begin{assm}\label{Ass:main}
\begin{enumerate}
\item $\sigma_{\mathrm{d}}(U)\neq \emptyset$.
\item The unitary operator $U$ finitely propagates with a constant $b>0$ with respect to a non-negative, possibly unbounded, self-adjoint operator $A.$
\end{enumerate}
\end{assm}
For any $\lambda\in\sigma_{\mathrm{d}}(U),$ we define the constant $d(\lambda)>0$ as
\begin{align*}
d(\lambda):=\mathrm{dist}(\lambda, \sigma_{\mathrm{ess}}(U))=\displaystyle\inf_{\mu\in\sigma_{\mathrm{ess}}(U)}|\lambda-\mu|.
\end{align*}
The main result of this section is as follows:
\begin{thm}\label{thm:main}
Under Assunption \ref{Ass:main}, for any $\psi\in \mathrm{Ker}(U-\lambda)\setminus\{0\}$ with $\lambda\in \sigma_{\mathrm{d}}(U)$, $e^{\delta A}\psi\in\mathcal{H}$ for any $\delta>0$ such that $2\sinh(\delta b)<d(\lambda)$. 
\end{thm}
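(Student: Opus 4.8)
The plan is to conjugate the eigenvalue equation by a bounded, step-like approximation of the weight $e^{\delta A}$, extract a bound uniform in the cut-off, and then pass to the limit. Decompose $\H=\bigoplus_{k\ge0}\H_{k}$ with $\H_{k}:=\mathrm{Ran}\,E_{A}([kb,(k+1)b))$ and write $P_{k}$ for the orthogonal projection onto $\H_{k}$; the finite propagation hypothesis is exactly the statement that $U$ is block-tridiagonal for this decomposition, i.e.\ $P_{j}UP_{k}=0$ whenever $|j-k|\ge2$. Let $\Lam$ (the operator of \eqref{Def:lambda}) act as the scalar $(k+1)b$ on $\H_{k}$ for $k\le N$ and as $(N+1)b$ on $\H_{k}$ for $k>N$; then $\Lam$ is bounded and boundedly invertible, $\{\Lam\}_{N}$ increases to an operator $\Lambda_{\infty}(A)$ with $A\le\Lambda_{\infty}(A)\le A+b$, and $e^{2\delta\Lam}\uparrow e^{2\delta\Lambda_{\infty}(A)}\ge e^{2\delta A}$. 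By monotone convergence $\|e^{\delta A}\psi\|^{2}\le\lim_{N}\|e^{\delta\Lam}\psi\|^{2}$, so it is enough to establish
\[
  \sup_{N}\big\|e^{\delta\Lam}\psi\big\|<\infty.\qquad(\ast)
\]

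The first key point is that conjugating $U$ by $e^{\delta\Lam}$ costs little. Since $U$ is block-tridiagonal and $\Lam$ is scalar on each $\H_{k}$, the operator $K_{N}:=e^{\delta\Lam}Ue^{-\delta\Lam}-U$ has only its two off-diagonal block-bands, gotten from those of $U$ by multiplication by factors of modulus at most $e^{\delta b}-1$ and $1-e^{-\delta b}$; as $\sum_{j}P_{j}UP_{j\pm1}$ are contractions (their blocks have pairwise orthogonal ranges and pairwise orthogonal domains), this gives, uniformly in $N$,
\[
  \|K_{N}\|\le(e^{\delta b}-1)+(1-e^{-\delta b})=2\sinh(\delta b)<d(\lambda).
\]
Hence, from $U\psi=\lambda\psi$, the vector $\phi_{N}:=e^{\delta\Lam}\psi$ (nonzero because $e^{\delta\Lam}$ is invertible) satisfies $(U-\lambda)\phi_{N}=-K_{N}\phi_{N}$.

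Next, split off the part of $\phi_{N}$ sitting over the essential spectrum. Writing $E_{\mathrm{ess}}:=E_{U}(\sigma_{\mathrm{ess}}(U))$ and using that it commutes with $U$ together with $\|(U-\lambda)w\|\ge d(\lambda)\|w\|$ for $w\in\mathrm{Ran}\,E_{\mathrm{ess}}$ (spectral theorem for the unitary $U$), the relation $(U-\lambda)\phi_{N}=-K_{N}\phi_{N}$ yields $\|E_{\mathrm{ess}}\phi_{N}\|\le\frac{2\sinh(\delta b)}{d(\lambda)}\|\phi_{N}\|$, a strict contraction; consequently
\[
  \|\phi_{N}\|\le\Big(1-\tfrac{2\sinh(\delta b)}{d(\lambda)}\Big)^{-1}\big\|(I-E_{\mathrm{ess}})\phi_{N}\big\|.
\]
It remains to bound the component of $\phi_{N}$ in the discrete spectral subspace of $U$, uniformly in $N$. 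Here one uses that $\psi$ is a \emph{pure} $\lambda$-eigenvector, so $\psi=E_{U}(\{\lambda\})\psi$: localizing to a small neighbourhood of $\lambda$ disjoint from $\sigma_{\mathrm{ess}}(U)$ one is reduced to controlling $e^{\delta\Lam}$ on a finite-dimensional space spanned by eigenvectors of $U$ whose eigenvalues are themselves separated from $\sigma_{\mathrm{ess}}(U)$, which is handled by a recursion (on the finitely many eigenvalues involved, with a strictly smaller weight at each step, terminating at weight $0$ where $(\ast)$ is trivial). Feeding the resulting uniform bound back into the two displays gives $(\ast)$, and hence $e^{\delta A}\psi\in\H$.

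I expect the last step — the treatment of the finite-dimensional contribution near $\lambda$ — to be the main obstacle: one must arrange the localization and the accompanying recursion so that the various resolvent-type losses combine to \emph{exactly} the threshold $2\sinh(\delta b)<d(\lambda)$ and not to a worse multiple of it, and one must control the commutators of $U$ with spectral projections of $A$ (which here play the role of the derivatives in Yafaev's treatment of first-order differential systems) by means of the block-tridiagonal structure. The remaining ingredients — the conjugation estimate, the essential-spectrum splitting, and the monotone passage to the limit — are routine once this is in place.
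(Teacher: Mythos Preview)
Your framework is sound and your commutator bound $\|K_{N}\|\le 2\sinh(\delta b)$ is exactly Lemma~\ref{Lem:cutoff2}; your block-band argument, using that $\sum_{k}P_{k\pm1}UP_{k}$ have pairwise orthogonal domains and ranges, is in fact shorter than the Duhamel computation given there. The divergence from the paper, and the genuine gap, is the step you yourself flag: controlling $(I-E_{\mathrm{ess}})\phi_{N}$ uniformly in $N$. The discrete spectral subspace $\mathrm{Ran}(I-E_{\mathrm{ess}})$ is in general infinite-dimensional, so your ``reduction to a finite-dimensional space'' already needs a further localisation to an arc $Y$ about $\lambda$ with $\overline{Y}\cap\sigma_{\mathrm{ess}}(U)=\emptyset$. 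Even then, bounding $\|E_{U}(Y)\phi_{N}\|=\|E_{U}(Y)e^{\delta\Lam}\psi\|$ requires a priori control of $e^{\delta'\Lam}\chi$ for the other eigenvectors $\chi\in\mathrm{Ran}\,E_{U}(Y)$; their eigenvalues $\mu$ satisfy only $d(\mu)>0$, not $d(\mu)\ge d(\lambda)$, so the admissible weights $\delta'$ in the recursion do not reach $0$ in finitely many steps, and you give no mechanism by which the accumulated constants stay finite or by which the sharp threshold $2\sinh(\delta b)<d(\lambda)$ (as opposed to $2\sinh(\delta b)<\mathrm{dist}(\lambda,\sigma(U)\setminus\{\lambda\})$) is recovered.

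The paper avoids this entirely by cutting off in $A$ rather than in the spectrum of $U$. Lemma~\ref{Lem:spec} is a Weyl-type statement: for every $\epsilon>0$ there exists $R>0$ with $\|(U-\lambda)f\|\ge(d(\lambda)-\epsilon)\|f\|$ whenever $E_{A}([0,R))f=0$, because such $f$ tend weakly to $0$ as $R\to\infty$ and hence become asymptotically orthogonal to every finite-rank spectral projection of $U$ near $\lambda$. One then works with $f_{N}:=e^{\Lam}E_{A}(R)\psi$ instead of your $\phi_{N}$; from $U\psi=\lambda\psi$ one gets
\[
(U-\lambda)f_{N}=[U,e^{\Lam}]e^{-\Lam}f_{N}+e^{\Lam}[U,E_{A}(R)]\psi,
\]
where the first term is bounded by $2\sinh(\delta b)\|f_{N}\|$ as you show, and the second is supported in a single $b$-shell near $R$ by block-tridiagonality (Lemma~\ref{Lem:cutoff1}), hence bounded by a constant independent of $N$. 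Choosing $\epsilon=\tfrac12(d(\lambda)-2\sinh(\delta b))$ closes the estimate at the exact threshold with no recursion and no appeal to other eigenvectors. The missing idea in your proposal is precisely this replacement of the spectral cut-off $E_{\mathrm{ess}}$ by the spatial cut-off $E_{A}([R,\infty))$.
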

\begin{rem}The non-negativity in the second part of Assumption \ref{Ass:main} is not essential. However, for simplicity, we assume the non-negativity of $A$ in this paper.
\end{rem}
In what follows, we always assume Assumption \ref{Ass:main}. To prove Theorem \ref{thm:main}, we prepare some lemmas.
\begin{lem}\label{Lem:spec}
We take $\lambda\in\sigma_{\mathrm{d}}(U).$ Then for any $\epsilon>0,$ there exists $R>0$ such that
\begin{align*}
\|Uf-\lambda f\|\ge \{d(\lambda)-\epsilon\}\|f\|,
\end{align*}
for all $f\in\mathcal{H}$ such that $E_{A}([0, R))f=0$.
\end{lem}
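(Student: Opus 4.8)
The plan is to exploit the fact that $\lambda\in\sigma_{\mathrm d}(U)$ is an isolated eigenvalue of finite multiplicity, so that, away from the (finite-dimensional) eigenspace $\mathrm{Ker}(U-\lambda)$, the operator $U-\lambda$ is bounded below by $d(\lambda)$ minus a small error; and then to show that vectors $f$ supported at large values of $A$ (i.e. $E_A([0,R))f=0$) are, up to a small error, orthogonal to that eigenspace and to its companions, so the lower bound persists for them. Concretely, write $P:=E_{\{\lambda\}}(U)$ for the (finite-rank, by discreteness) spectral projection of $U$ onto $\lambda$, and let $P':=E_U(\overline{D})$ where $\overline D$ is the closed disc (or arc on the unit circle) of radius $d(\lambda)/2$ about $\lambda$ intersected with $\sigma(U)$; by definition of $d(\lambda)$ this set contains only $\lambda$, so $P'=P$ and $P$ is finite-rank. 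On $(1-P)\mathcal H$ the spectrum of $U$ is contained in $\{z\in\sigma(U):|z-\lambda|\ge d(\lambda)\}$, hence by the spectral theorem $\|(U-\lambda)(1-P)g\|\ge d(\lambda)\|(1-P)g\|$ for all $g$.

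The second ingredient is a standard compactness/approximation argument: since $P$ has finite rank, pick an orthonormal basis $\varphi_1,\dots,\varphi_m$ of $\mathrm{Ran}\,P$. Each $\varphi_j\in\mathcal H$ satisfies $\|E_A([0,R))\varphi_j\|\to\|\varphi_j\|=1$ as $R\to\infty$ (spectral calculus: $E_A([0,R))\to I$ strongly as $R\to\infty$, because $A$ is self-adjoint and hence $E_A(\R)=I$, and $A\ge 0$ so $E_A([0,\infty))=I$). Equivalently $\|E_A([R,\infty))\varphi_j\|\to 0$. Now for $f$ with $E_A([0,R))f=0$, i.e. $f=E_A([R,\infty))f$, we have $\langle \varphi_j,f\rangle=\langle E_A([R,\infty))\varphi_j,f\rangle$, so $|\langle\varphi_j,f\rangle|\le \|E_A([R,\infty))\varphi_j\|\,\|f\|$. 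Summing, $\|Pf\|^2=\sum_j|\langle\varphi_j,f\rangle|^2\le \big(\sum_j\|E_A([R,\infty))\varphi_j\|^2\big)\|f\|^2=:\eta(R)^2\|f\|^2$ with $\eta(R)\to 0$ as $R\to\infty$.

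Finally combine the two: for such an $f$,
\begin{align*}
\|Uf-\lambda f\|
&\ge \|(U-\lambda)(1-P)f\| - \|(U-\lambda)Pf\|\\
&\ge d(\lambda)\|(1-P)f\| - 2\|Pf\|\\
&\ge d(\lambda)(\|f\|-\|Pf\|) - 2\|Pf\|
\ge \big(d(\lambda) - (d(\lambda)+2)\eta(R)\big)\|f\|,
\end{align*}
using $\|U-\lambda\|\le 2$ and $\|(1-P)f\|\ge\|f\|-\|Pf\|$. Given $\e>0$, choose $R$ so large that $(d(\lambda)+2)\eta(R)<\e$; then $\|Uf-\lambda f\|\ge\{d(\lambda)-\e\}\|f\|$, as claimed. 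I expect the only mild subtlety to be the first step — justifying that the spectral projection associated to the ball of radius $d(\lambda)/2$ about $\lambda$ really is the rank-$m$ projection onto $\mathrm{Ker}(U-\lambda)$ and that the complementary part is bounded below by $d(\lambda)$ — but this is immediate from the definition of $d(\lambda)$ together with $\sigma(U)=\sigma_{\mathrm{ess}}(U)\cup\sigma_{\mathrm d}(U)$ and the fact that isolated points of $\sigma_{\mathrm d}$ are eigenvalues of finite multiplicity; the rest is routine.
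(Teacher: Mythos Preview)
Your overall strategy is sound, but there is a genuine gap at precisely the place you flag as a ``mild subtlety'' and then dismiss. The quantity $d(\lambda)$ is the distance from $\lambda$ to $\sigma_{\mathrm{ess}}(U)$, \emph{not} to $\sigma(U)\setminus\{\lambda\}$. Nothing prevents other points of $\sigma_{\mathrm d}(U)$ from lying inside the disc of radius $d(\lambda)/2$ about $\lambda$, so your claim that this disc meets $\sigma(U)$ only in $\{\lambda\}$ (hence $P'=P$) is false in general. Consequently the crucial lower bound $\|(U-\lambda)(1-P)g\|\ge d(\lambda)\|(1-P)g\|$ is unjustified; as written, your argument only delivers the lemma with $d(\lambda)$ replaced by $\mathrm{dist}\bigl(\lambda,\sigma(U)\setminus\{\lambda\}\bigr)$, which may be strictly smaller.

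The repair is minor and preserves your direct approach: given $\e>0$, work instead with the finite-rank projection $P':=E_U(X)$ for an arc $X=\{\lambda e^{ik}:|k|\le\theta\}$ chosen so that points of $S^1\setminus X$ lie at distance at least $d(\lambda)-\e/2$ from $\lambda$ while $X$ stays disjoint from $\sigma_{\mathrm{ess}}(U)$. Since discrete eigenvalues can accumulate only at $\sigma_{\mathrm{ess}}(U)$, such an $X$ contains at most finitely many of them, so $P'$ is still finite-rank and your $\eta(R)\to0$ argument carries over verbatim with $P'$ in place of $P$; on $(1-P')\mathcal H$ one now has the honest lower bound $d(\lambda)-\e/2$, and your final chain of inequalities yields the claim. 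This is essentially what the paper does: it too uses the projection $E_U(X)$ onto an arc (not $E_{\{\lambda\}}(U)$), but packages the argument as a proof by contradiction, exploiting that $f_R\rightharpoonup 0$ together with compactness of $E_U(X)$ in place of your explicit finite-rank estimate $\|P'f\|\le\eta(R)\|f\|$. Once the gap is fixed, the two arguments rest on the same idea; your direct version is arguably cleaner.
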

\begin{proof}
We suppose the contrary. Then there exists $\epsilon>0$ such that for any $R>0$, there exists $f_{R}\in\mathcal{H}$ such that $\|f_{R}\|=1$, $E_{A}([0, R))f_{R}=0$ and
\begin{align*}
\|Uf_{R}-\lambda f_{R}\|<d(\lambda)-\epsilon.
\end{align*}
We choose $\theta\in[0, 2\pi)$ such that $a:=\mathrm{dist}\left(\mathrm{Arc}(\lambda, \theta), \sigma_{\mathrm{ess}}(U)\right)<d(\lambda)$ and $a>d(\lambda)-\epsilon$, where 
$$\mathrm{Arc}(\lambda, \theta):=\{\lambda e^{ik}| -\theta \le k \le \theta\}.$$
We set $X:=\mathrm{Arc}(\lambda,\theta),$ and $g_{R}:=(1-E_{U}(X))f_{R}$, where $E_{U}(\cdot)$ is the spectral measure of $U$. From the spectral theorem for unitary operators, it follows that
\begin{align*}
\|Ug_{R}-\lambda g_{R}\|^2=\displaystyle\int_{S^{1}\setminus X}|\mu-\lambda|^2\mathrm{d}\|E_{U}(\mu)g_{R}\|^2> a^2 \|g_{R}\|^2,
\end{align*}
where $S^{1}$ is the unit circle on $\C.$
Since $f_{R}$ weakly converges to 0 (as $R\to\infty$) and  $E_{U}(X)$ is compact, $E_{U}(X)f_{R}$ strongly converges to 0 (as $R\to\infty$). This implies that $\|g_{R}-f_{R}\|\to 0$ (as $R\to \infty$).
On the other hand, we have
\begin{align*}
a\|g_{R}\|&<\|Ug_{R}-\lambda g_{R}\|
\\
&\le \|Uf_{R}-\lambda f_{R}\|+\|(U-\lambda)E_{U}(X)f_{R}\|
\\
&<d(\lambda)-\epsilon +2\|E_{U}(X)f_{R}\|.
\end{align*}
By taking the limit $R\rightarrow\infty$, we get $a\le d(\lambda)-\epsilon$ since $\|g_{R}\|\rightarrow \|f_{R}\|=1\ (\mathrm{as}  \ R\rightarrow\infty).$ This is a contradiction since we took $a$ like as $a>d(\lambda)-\epsilon$.
\end{proof}
Before going to next lemma, we introduce followig step-like functions. For $N\in \mathbb{N}$ and $\delta>0,$ we define
\begin{align}\label{Def:lambda}
\Lambda(r):=\displaystyle\sum_{n=1}^{\infty}\delta  nb\I_{B_{n}}(r),\quad 
\Lambda_{N}(r):=
\begin{cases}
\displaystyle\sum_{n=1}^{N} \delta nb\mathbb{I}_{B_{n}}(r),\ &r\in[0, Nb),
\\
\delta Nb, & r\in[Nb,\infty),
\end{cases}
\end{align}
where $B_{n}:=[(n-1)b, nb)\subset \mathbb{R}$ and $\I_{B_{n}}$ is the characteristic function of $B_{n}.$ Then, $\Lambda$ approximates a function $f(r):=\delta r,\ (r\in[0, \infty))$ from the above and $\Lambda_{N}$ is a cut-off function of $\Lambda.$

For a two bounded operators $S$ and $T$, we define the commutator $[S, T]$ as $[S, T]:=ST-TS.$
\begin{lem}\label{Lem:cutoff1}
For any $R>0$, we set $E_{A}(R):=E_{A}([R, \infty)).$ Then, $e^{\Lambda(A)}[U, E_{A}(R)]$ is bounded on $\mathcal{H}$ and
\begin{align*}
\|e^{\Lambda(A)}[U, E_{A}(R)]\|\le e^{\delta \lceil R+b\rceil_{b}}+e^{\delta \lceil R\rceil_{b}},
\end{align*}
where for $x>0, \lceil x\rceil_{b}:=b\cdot \min\{n\in\mathbb{N}|\ x\le nb\}.$ 
\end{lem}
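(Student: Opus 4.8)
The plan is to exploit the finite‑propagation hypothesis to control how $U$ moves mass between the dyadic‑type shells $B_n=[(n-1)b,nb)$. Write $P_n:=E_A(B_n)$ for the spectral projection of $A$ onto the shell $B_n$, so that $\sum_{n\ge 1}P_n=I$ (recall $A\ge 0$), and note that $E_A(R)=\sum_{nb>R} ?$—more precisely $E_A([R,\infty))=\sum_{n:\,nb>R}P_n$ together with a partial projection on the one shell containing $R$; to avoid this nuisance I would first observe that $\lceil R\rceil_b$ is exactly the right‑hand endpoint of that shell, so $E_A([R,\infty))$ and $E_A([\lceil R\rceil_b,\infty))$ differ by a projection supported in a single shell $B_m$ with $mb=\lceil R\rceil_b$. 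Thus it suffices to bound $e^{\Lambda(A)}[U,P]$ for a projection $P$ of the form $E_A([mb,\infty))$ or a single shell projection, and then add the two contributions.

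The key computation is the following. For $P=E_A([mb,\infty))$ we have $[U,P]=UP-PU=(I-P)UP-PU(I-P)$. Now $P$ projects onto $\bigcup_{n>m}B_n=[mb,\infty)$, so by Assumption \ref{Ass:main}(2) applied shell‑by‑shell, $UP$ has range inside $E_A([mb-b,\infty))=E_A([(m-1)b,\infty))$; hence $(I-P)UP$ has range inside the single shell $B_m$, on which $\Lambda(A)=\delta mb=\delta\lceil R\rceil_b$. Symmetrically, $PU(I-P)$: here $I-P$ projects onto $[0,mb)$, so $U(I-P)$ has range in $E_A([0,mb+b))$, and therefore $PU(I-P)$ has range in the shell $B_{m+1}$, where $\Lambda(A)=\delta(m+1)b$. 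Consequently $e^{\Lambda(A)}[U,P]=e^{\delta mb}(I-P)UP-e^{\delta(m+1)b}PU(I-P)$ is a difference of two bounded operators, each of operator norm at most $e^{\delta(m+1)b}$ and $e^{\delta mb}$ respectively (using $\|U\|=1$ and that projections have norm $\le 1$). Translating $mb=\lceil R\rceil_b$ and noting $(m+1)b=\lceil R+b\rceil_b$ in the generic case, this gives the two terms $e^{\delta\lceil R+b\rceil_b}+e^{\delta\lceil R\rceil_b}$ in the claimed bound. One then handles the leftover single‑shell correction projection the same way — a single shell projection $Q$ supported in $B_m$ satisfies, after the same finite‑propagation argument, $e^{\Lambda(A)}[U,Q]$ bounded with norm absorbed into the same two exponentials — and sums up.

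The main obstacle I anticipate is purely bookkeeping: matching the endpoints of the half‑open shells $B_n$ with the ceiling function $\lceil\cdot\rceil_b$ so that the finite‑propagation shift by $b$ lands exactly on the adjacent shell, and making sure the partial shell containing $R$ itself does not produce an extra term beyond the two written. This is why the statement uses half‑open intervals and why the remark after Definition \ref{Def:propagate} insists on them: it guarantees $\mathbb{R}_{\ge 0}=\bigsqcup_n B_n$ with no overlap, so that "shift by $b$" maps $B_n$ into $B_{n-1}\cup B_n\cup B_{n+1}$ cleanly. A secondary, more structural point to verify carefully is that $e^{\Lambda(A)}$, although unbounded, is applied only to operators whose range has already been localized to a finite union of shells, on which $\Lambda(A)$ is a bounded (piecewise constant) operator; this is precisely what makes $e^{\Lambda(A)}[U,E_A(R)]$ bounded despite $e^{\Lambda(A)}$ itself being unbounded, and it is the reason the step‑like approximant $\Lambda$ is introduced in place of $\delta A$.
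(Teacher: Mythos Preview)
Your central computation is exactly the paper's argument: write $[U,P]=(I-P)UP-PU(I-P)$ and use finite propagation to see that $(I-P)UP$ has range in $E_A([R-b,R))$ while $PU(I-P)$ has range in $E_A([R,R+b))$. Since $\Lambda$ is bounded above by $\delta\lceil R\rceil_b$ on $[R-b,R)$ and by $\delta\lceil R+b\rceil_b$ on $[R,R+b)$, the stated estimate follows at once. The paper does precisely this, working directly with $P=E_A(R)$ for the given $R$.

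Your detour of first splitting $E_A(R)=E_A([\lceil R\rceil_b,\infty))+Q$ and handling the two pieces separately is unnecessary and, as written, costs you the sharp constant. The commutator $[U,E_A([\lceil R\rceil_b,\infty))]$ already produces the full bound $e^{\delta\lceil R\rceil_b}+e^{\delta\lceil R+b\rceil_b}$, and $[U,Q]$ contributes another term of the same order; ``summing up'' therefore yields twice the stated bound, and there is no cancellation that lets the $Q$ piece be ``absorbed''. The fix is simply to drop the splitting and apply your identity to $P=E_A(R)$ directly: finite propagation in Definition~\ref{Def:propagate} is stated for arbitrary half-open intervals, not only for the shells $B_n$, so $(I-P)UP$ and $PU(I-P)$ are still localized to $[R-b,R)$ and $[R,R+b)$ respectively (each meeting at most two adjacent shells), on which $\Lambda$ is bounded by the two ceiling values appearing in the lemma.
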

\begin{proof}
Since $U$ finitely propagates with respect to $A$, it follows that
\begin{align*}
&[U, E_{A}(R)]
\\
&=\{UE_{A}(R)-E_{A}(R)U\}
\\
&\quad \times \{E_{A}([0, R-b))+E_{A}([R-b, R))+E_{A}([R, R+b))+E_{A}(R+b)\}
\\
&=-E_{A}([R, R+b))UE_{A}([R-b, R))+E_{A}([R-b, R))UE_{A}([R, R+b)),
\end{align*}
where if $R-b\le 0,$ we set $E_{A}([0, R-b))=0$ and $E_{A}([R-b, R))=E_{A}([0, R)).$
Thus, for any $\psi \in\mathcal{H},$ it follows that $[U, E_{A}(R)]\psi\in D(e^{\Lambda(A)})$ and
\begin{align*}
\|e^{\Lambda(A)}[U, E_{A}(R)]\psi\|\le\left(e^{\delta \lceil R+b\rceil_{b}}+e^{\delta \lceil R\rceil_{b}}\right)\|\psi\|.
\end{align*}
Therefore the lemma follows.
\end{proof}
\begin{lem}\label{Lem:cutoff2}
For any $N\in\N,$ it follows that
\begin{align*}
\|[U, e^{\Lambda_{N}(A)}]e^{-\Lambda_{N}(A)}\|\le 2\sinh(\delta b).
\end{align*}
In particular, the above estimate in the right hand side does not depend on $N$.
\end{lem}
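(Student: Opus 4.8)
The plan is to exploit the finite propagation of $U$ together with the fact that $\Lambda_N$ is constant on each block $B_n = [(n-1)b, nb)$, so that the operator $e^{\Lambda_N(A)}$ is a ``staircase'' whose jumps between consecutive blocks are exactly by the factor $e^{\delta b}$. First I would decompose the identity into the spectral projections $P_n := E_A(B_n)$ for $n = 1, \dots, N$ together with the tail projection $P_{>N} := E_A([Nb, \infty))$. On $P_n$ one has $e^{\Lambda_N(A)} = e^{\delta n b} P_n$ (and $e^{\delta N b}P_{>N}$ on the tail), so $e^{\Lambda_N(A)}$ acts as a scalar on each piece.

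Next I would compute $[U, e^{\Lambda_N(A)}]e^{-\Lambda_N(A)}$ by sandwiching between these projections: writing $Q_m$ for the $m$-th projection (with the convention that $Q_m$ for $m > N$ all collapse into the tail), the finite propagation property of Definition 2.1 forces $Q_m U Q_n = 0$ whenever the blocks $B_m$ and $B_n$ are more than one step apart, i.e. whenever $|m - n| \ge 2$. Hence only the terms with $m = n$ and $m = n \pm 1$ survive. The diagonal terms $m = n$ contribute nothing to the commutator, since $e^{\Lambda_N(A)}$ is scalar there and cancels against $e^{-\Lambda_N(A)}$. For the off-diagonal terms, on $Q_{n+1} U Q_n$ the operator $[U, e^{\Lambda_N(A)}]e^{-\Lambda_N(A)}$ acts as multiplication by $e^{\delta(n+1)b}e^{-\delta n b} - 1 = e^{\delta b} - 1$, and on $Q_{n-1} U Q_n$ it acts as $e^{\delta(n-1)b}e^{-\delta n b} - 1 = e^{-\delta b} - 1$. (Near the tail the scalars are capped at $e^{\delta N b}$, which only makes the relevant factor $1 - 1 = 0$ or keeps it in the same range, so the bound is not worsened.)

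To finish, I would collect these pieces: $[U, e^{\Lambda_N(A)}]e^{-\Lambda_N(A)} = (e^{\delta b}-1)\,U_{+} + (e^{-\delta b}-1)\,U_{-}$, where $U_{+} := \sum_n Q_{n+1} U Q_n$ and $U_{-} := \sum_n Q_{n-1} U Q_n$ are the ``upward'' and ``downward'' parts of $U$ relative to the block decomposition. Since the ranges $\mathrm{Ran}\,Q_n$ are mutually orthogonal, $U_{+}$ and $U_{-}$ are each contractions (they are restrictions of the unitary $U$ followed by orthogonal projections onto orthogonal subspaces, so $\|U_{+}\| \le 1$ and $\|U_{-}\| \le 1$); moreover $U_{+}$ and $U_{-}$ have orthogonal ranges AND orthogonal domains, which lets me bound the sum cleanly. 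The cleanest route is to note that for any $\psi$ with $\|\psi\|=1$, $\|U_+\psi\|^2 + \|U_-\psi\|^2 \le \|U\psi\|^2 = 1$ is not quite what is needed for the operator norm of the combination; instead I would use that $\|(e^{\delta b}-1)U_+ \psi + (e^{-\delta b}-1)U_-\psi\|^2 = (e^{\delta b}-1)^2\|U_+\psi\|^2 + (e^{-\delta b}-1)^2\|U_-\psi\|^2$ by range-orthogonality, and then maximize $(e^{\delta b}-1)^2 s + (e^{-\delta b}-1)^2 t$ over $s + t \le 1$, $s,t \ge 0$, giving the bound $\max\{e^{\delta b}-1,\ 1-e^{-\delta b}\} = e^{\delta b}-1 \le 2\sinh(\delta b)$.

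Actually, to land exactly on $2\sinh(\delta b) = e^{\delta b} - e^{-\delta b}$ rather than $e^{\delta b}-1$, I expect the honest bookkeeping (keeping track that the $m=n$ block is not genuinely absent but contributes a term that must be combined with the off-diagonal ones when $U$ is written as $U_+ + U_0 + U_-$, and using $\|U_+ + U_0 + U_-\| = 1$) yields the symmetric quantity; the main obstacle is precisely this sharp constant — getting $e^{\delta b} - e^{-\delta b}$ instead of the cruder $e^{\delta b} - 1$ requires carefully exploiting that the positive and negative shift parts of a unitary cannot both be large simultaneously, so the worst case balances a $+$ shift against a $-$ shift rather than putting all the mass on the $+$ side. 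The block-orthogonality of the ranges (and of the domains) of $U_\pm$ is what makes this quantitative, and handling the boundary block $[Nb, \infty)$ where $\Lambda_N$ flattens out — checking that the capped scalars there never produce a factor exceeding the ones already accounted for — is the remaining routine point to verify.
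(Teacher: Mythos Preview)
Your block decomposition is correct and reaches the same expression as the paper: writing $\alpha = 1-e^{-\delta b}$ and $\beta = e^{\delta b}-1$, one has (up to an overall sign)
\[
[U,e^{\Lambda_N(A)}]e^{-\Lambda_N(A)} = \alpha\, U_- - \beta\, U_+,\qquad U_+=\sum_m Q_{m+1}UQ_m,\quad U_-=\sum_m Q_{m-1}UQ_m .
\]
The paper obtains this via the Duhamel formula and a four-case splitting; your direct computation using that $e^{\Lambda_N(A)}$ is scalar on each block is more elementary and lands in the same place.

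The genuine gap is the claim that $U_+$ and $U_-$ have orthogonal ranges. They do not: for a fixed output block $Q_m$, the operator $U_+$ contributes $Q_mUQ_{m-1}\psi$ while $U_-$ contributes $Q_mUQ_{m+1}\psi$, and both lie in $\mathrm{Ran}\,Q_m$. Hence the cross term $\langle U_+\psi,U_-\psi\rangle$ does \emph{not} vanish, and your equality $\|\beta U_+\psi+\cdots\|^2=\beta^2\|U_+\psi\|^2+\cdots$ fails. This is why you drift toward the (false) bound $e^{\delta b}-1$ and then cannot reconcile it with $2\sinh(\delta b)$.

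The fix is much simpler than what you attempt in the last paragraph: you already proved $\|U_+\|\le 1$ and $\|U_-\|\le 1$ (each term $\|Q_{m\pm 1}UQ_m\psi\|^2\le\|Q_m\psi\|^2$, and the output blocks within each of $U_+$, $U_-$ separately are orthogonal). Then the triangle inequality gives
\[
\|\alpha U_- - \beta U_+\|\le \alpha+\beta = (1-e^{-\delta b})+(e^{\delta b}-1)=e^{\delta b}-e^{-\delta b}=2\sinh(\delta b),
\]
which is exactly the stated bound. The paper instead expands $\|\alpha U_-\psi-\beta U_+\psi\|^2$, bounds the cross term via $2|\mathrm{Re}\langle a,b\rangle|\le\|a\|^2+\|b\|^2$, and arrives at $(\alpha+\beta)^2\|\psi\|^2$; this is equivalent but longer. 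So your route is sound once the erroneous orthogonality claim is replaced by the triangle inequality, and you need not worry about any ``sharp constant'' subtlety.
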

\begin{proof}
By applying the Duhamel formula,  $[U, e^{\Lambda_{N}(A)}]e^{-\Lambda_{N}(A)}$ can be expressed as
\begin{align}\label{eq:Duhamel}
[U, e^{\Lambda_{N}(A)}]e^{-\Lambda_{N}(A)}=\displaystyle\int_{0}^{1}e^{t\Lambda_{N}(A)}[U, \Lam]e^{-t\Lam}\mathrm{d}t.
\end{align}
The integrand in \eqref{eq:Duhamel} is decomposed as follows:
\begin{align*}
&e^{t\Lambda_{N}(A)}[U, \Lam]e^{-t\Lam}
\\
&= e^{t\Lam}\{U\Lam-\Lam U\}E_{A}(B_{1})
\\
&+\displaystyle\sum_{m=2}^{N}e^{t\Lam}\{ U\Lam-\Lam U\}e^{-t\Lam}E_{A}(B_{m})
\\
&+e^{t\Lam}\{U\Lam-\Lam U\}e^{-t\Lam}E_{A}(B_{N+1})
\\
&+e^{t\Lam}\{U\Lam-\Lam U\}e^{-t\Lam}E_{A}((N+1)b)
\\
&=:\mathrm{I}+\mathrm{II}+\mathrm{III}+\mathrm{IV}.
\end{align*}
The first term $\mathrm{I}$ can be calculated as follows:
\begin{align*}
\mathrm{I}&=\{E_{A}(B_{1})+E_{A}(B_{2})\}e^{t\Lam}\{U \Lam-\Lam U\}E_{A}(B_{1})
\\
&=E_{A}(B_{2})e^{2t\delta b}(U\delta b -2\delta bU)e^{-t\delta b}E_{A}(B_{1})
\\
&=-\delta b e^{t\delta b}E_{A}(B_{2})UE_{A}(B_{1}).
\end{align*}
The second term $\mathrm{II}$ can be calculated as follows:
\begin{align*}
\mathrm{II}&=\displaystyle\sum_{m=2}^{N}\{E_{A}(B_{m-1})+E_{A}(B_{m})+E_{A}(B_{m+1})\}
\\
&\quad \times e^{t\Lam}\{U\Lam-\Lam U\}e^{-\Lam}E_{A}(B_{m})
\\
&=\displaystyle\sum_{m=2}^{N}\Big[E_{A}(B_{m-1})e^{t\delta b(m-1)}\{U\delta bm-\delta b(m-1) U\}e^{-t\delta bm}E_{A}(B_{m}) 
\\
&\quad +E_{A}(B_{m+1})e^{t\delta b(m+1)}\{U\delta bm-\delta b(m+1)U \}e^{-t\delta bm} E_{A}(B_{m})\Big]
\\
&=\delta b\displaystyle\sum_{m=2}^{N}\Big[ e^{-t\delta b}E_{A}(B_{m-1})UE_{A}(B_{m})-e^{t\delta b}E_{A}(B_{m+1})UE_{A}(B_{m})\Big].
\end{align*}
The third term $\mathrm{III}$ can be calculated as follows:
\begin{align*}
\mathrm{III}&=\{E_{A}(B_{N})+E_{A}(B_{N+1})+E_{A}(B_{N+2})\}
\\
&\quad \times e^{t\Lam}\{U\Lam-\Lam U\}e^{-t\Lam}E_{A}(B_{N+1})
\\
&=E_{A}(B_{N})e^{t\delta bN}\{U\delta b(N+1)- \delta bN U\}e^{-t\delta b (N+1)}E_{A}(B_{N+1})
\\
&=\delta be^{-t\delta b}E_{A}(B_{N})UE_{A}(B_{N+1}).
\end{align*}
Lastly, the forth term $\mathrm{IV}$ can be calculated as follows:
\begin{align*}
\mathrm{IV}&=E_{A}(Nb)e^{t\Lam}\{U\Lam-\Lam U\}e^{-t\Lam}E_{A}((N+1)b)
\\
&=E_{A}(Nb)e^{t\Lam}(U b\delta N - b\delta N U)e^{-t\Lam}E_{A}((N+1)b)
\\
&=0.
\end{align*}
Thus, we get the following expression:
\begin{align*}
&[U, e^{\Lam}]e^{-\Lam}
\\
&=\delta b \displaystyle\int_{0}^{1}e^{-t\delta b}\mathrm{d}t\cdot \displaystyle\sum_{m=2}^{N+1}E_{A}(B_{m-1})UE_{A}(B_{m})-\delta b  \displaystyle\int_{0}^{1}e^{t\delta b}\mathrm{d}t\cdot\displaystyle\sum_{m=1}^{N}E_{A}(B_{m+1})UE_{A}(B_{m})
\\
&=(1-e^{-\delta b})\displaystyle\sum_{m=2}^{N+1}E_{A}(B_{m-1})UE_{A}(B_{m})-(e^{\delta b}-1)\displaystyle\sum_{m=1}^{N}E_{A}(B_{m+1})UE_{A}(B_{m}).
\end{align*}
For any $\psi\in\mathcal{H}$, we have
\begin{align*}
&\|[U, e^{\Lam}]e^{-\Lam}\psi\|^2
\\
&=\|(1-e^{-\delta b})\displaystyle\sum_{m=2}^{N+1}E_{A}(B_{m-1})UE_{A}(B_{m})\psi-(e^{\delta b}-1)\displaystyle\sum_{m=1}^{N}E_{A}(B_{m+1})UE_{A}(B_{m})\psi\|^2
\\
&=(1-e^{-\delta b})^2\displaystyle\sum_{m=2}^{N+1}\|E_{A}(B_{m-1})UE_{A}(B_{m})\psi\|^2+(e^{\delta b}-1)^2\displaystyle\sum_{m=1}^{N}\|E_{A}(B_{m+1})UE_{A}(B_{m})\psi\|^2
\\
&-2(1-e^{-\delta b})(e^{\delta b}-1)\displaystyle\sum_{m=2}^{N+1}\displaystyle\sum_{n=1}^{N}\mathrm{Re}\langle E_{A}(B_{m-1})UE_{A}(B_{m})\psi, E_{A}(B_{n+1})UE_{A}(B_{n})\psi \rangle
\\
&=(1-e^{-\delta b})^2\displaystyle\sum_{m=2}^{N+1}\|E_{A}(B_{m-1})UE_{A}(B_{m})\psi\|^2+(e^{\delta b}-1)^2\displaystyle\sum_{m=1}^{N}\|E_{A}(B_{m+1})UE_{A}(B_{m})\psi\|^2
\\
&-2(1-e^{-\delta b})(e^{\delta b}-1)\displaystyle\sum_{n=2}^{N}\mathrm{Re}\langle E_{A}(B_{n})UE_{A}(B_{n+1})\psi, E_{A}(B_{n})UE_{A}(B_{n-1})\psi \rangle
\\
&\le (1-e^{-\delta b})^2\displaystyle\sum_{m=2}^{N}\|E_{A}(B_{m})\psi\|^2+(e^{\delta b}-1)^2\displaystyle\sum_{m=1}^{N}\|E_{A}(B_{m})\psi\|^2
\\
&\quad +(1-e^{-\delta b})(e^{\delta b}-1)\displaystyle\sum_{n=2}^{N}\{\|E_{A}(B_{n})UE_{A}(B_{n+1})\psi\|^2+\|E_{A}(B_{n})UE_{A}(B_{n-1})\psi\|^2\}
\\
&\le (1-e^{\delta b})^2\|\psi\|^2+(e^{\delta b}-1)^{2}\|\psi\|^2+2(1-e^{-\delta b})(e^{\delta b}-1)\|\psi\|^2
\\
&=\{(1-e^{-\delta b})+(e^{\delta b}-1)\}^2\|\psi\|^2
\\
&=(e^{\delta b}-e^{-\delta b})^2\|\psi\|^2.
\end{align*}
Thus, the lemma follows.
\end{proof}
\begin{proof}[Proof of Theorem \ref{thm:main}]
We choose $\epsilon>0$ as $\epsilon:=[d(\lambda)-2\sinh(\delta b)]/2.$ Then, by Lemma \ref{Lem:spec}, there exists $R>0$ such that for any $f\in \mathcal{H}$ with $E_{A}([0, R))f=0, $ we have
\begin{align*}
\{d(\lambda)-\epsilon\}\|f\|\le \|Uf-\lambda f\|.
\end{align*}
We take $\psi\in\mathrm{Ker}(U-\lambda)\setminus\{0\}$ with $\lambda\in\sigma_{\mathrm{d}}(U)$. For $R$ and $b$, there exists $N_{0}\in\mathbb{N}$ such that $R<N_{0}b$. Then we set $f_{N}:=e^{\Lam}E_{A}(R)\psi,\ (N\ge N_{0}).$ Since $E_{A}([0, R))f_{N}=0$,  we have the following for arbitrary $N\ge N_{0}$:
\begin{align}\label{eq:bound}
 \{d(\lambda)-\epsilon\}\|f_{N}\|\le \|Uf_{N}-\lambda f_{N}\|.
\end{align}
From $U\psi=\lambda\psi,$ we get
\begin{align}\label{eq:comm}
Uf_{N}-\lambda f_{N}=[U, e^{\Lam}E_{A}(R)]\psi=[U, e^{\Lam}]E_{A}(R)\psi+ e^{\Lam}[U, E_{A}(R)]\psi.
\end{align}
From Lemma \ref{Lem:cutoff1}, we get
\begin{align*}
\|e^{\Lambda_{N}(A)}[U, E_{A}(R)]\psi\|\le\left (e^{\delta \lceil R+b\rceil_{b}}+e^{\delta  \lceil R\rceil_{b}}\right)\|\psi\|,
\end{align*}
For the first term of \eqref{eq:comm}, from Lemma \ref{Lem:cutoff2}, we get
\begin{align*}
\|[U, e^{\Lambda_{N}(A)}]E_{A}(R)\psi\|=\|[U, e^{\Lam}]e^{-\Lam}e^{\Lam}E_{A}(R)\psi\|\le 2\sinh(\delta b)\| f_{N}\|.
\end{align*}
Thus, we arrive at 
\begin{align*}
\|Uf_{N}-\lambda f_{N}\|\le \left(e^{\delta \lceil R+b\rceil_{b}}+e^{\delta  \lceil R\rceil_{b}}\right)\|\psi\|+2\sinh(\delta b)\|f_{N}\|.
\end{align*}
From the above inequality and \eqref{eq:bound}, we arrive at
\begin{align}\label{eq:estimate}
\displaystyle\frac{d(\lambda)-2\sinh(\delta b)}{2}\|f_{N}\|\le \left(e^{\delta \lceil R+b\rceil_{b}}+e^{\delta  \lceil R\rceil_{b}}\right)\|\psi\|.
\end{align}
Since $N$ is arbitrary and right hand side of \eqref{eq:estimate} is independent of $N$, we conclude that $e^{\Lambda(A)}\psi\in \mathcal{H}$ by the monotone convergence theorem. This implies $e^{\delta A}\psi\in\mathcal{H}.$
\end{proof}
\section{Application}
In this section, we apply the result to multi-dimensional quantum walks. We choose the Hilbert space $\mathcal{H}$ as 
\begin{align*}
\mathcal{H}:=\ell^{2}(\mathbb{Z}^{d}; \mathbb{C}^{2d}):=\left\{f:\Z^{d}\rightarrow \C^{2d}\Big|\ \displaystyle\sum_{x\in\Z^{d}}\|f(x)\|_{\C^{2d}}^2<\infty\right\}.
\end{align*}
In what follows, we freely use the identification $\mathcal{H}\simeq \oplus_{j=1}^{d}\ell^{2}(\mathbb{Z};\mathbb{C}^2)$. Thus
\begin{align*}
f(x)=\begin{bmatrix}f_{1}(x)\\ f_{2}(x) \\ \vdots \\ f_{d}(x)\end{bmatrix}=\begin{bmatrix} f_{11}(x) \\ f_{12}(x) \\ \vdots \\ f_{d1}(x) \\ f_{d2}(x) \end{bmatrix},\ f\in \mathcal{H}, \ x\in\Z^{d}.
\end{align*} 
Let $\{e_{j}\}_{j=1}^{d}$ be the set of standard orthogonal basis of $\mathbb{Z}^{d}$. Let $L_{j}$ $(j=1, \dots, d)$ be the shift operator on $j-$th direction defined by 
\begin{align*}
(L_{j}f)(x):=f(x+e_{j}),\ f\in\mathcal{H}, \ x\in\mathbb{Z}^{d}, \ j=1, \dots, d. 
\end{align*}
To introduce the shift operator $S$, we set
\begin{align*}
D:=\left\{(p, q)=(p_{1}, \dots, p_{d}, q_{1},\dots, q_{d})\in\R^{d}\times \C^{d}\Big|\ p_{j}^2+|q_{j}|^2=1,\ (j=1, \dots, d)\right\}.
\end{align*}
For $(p, q)\in D$, we define the shift operator $S$ by 
\begin{align*}
S:=S_{1}\oplus S_{2}\oplus \dots \oplus S_{d},\quad S_{j}:=\begin{bmatrix} p_{j} & q_{j}L_{j} \\ (q_{j}L_{j})^{\ast }& -p_{j}\end{bmatrix},\quad j=1,\dots, d.
\end{align*}
Next, we intoduce the coin operator $C$. Let $\{C(x)\}_{x\in\Z}\subset U(2d)$  be a set of $2d\times 2d$ self-adjoint and unitary matrices. We define the coin operator $C$ as a multiplication operator by $C(x):$
\begin{align*}
(Cu)(x):=C(x)u(x),\ u\in\mathcal{H}, \ x\in \Z.
\end{align*}
For the coin operator $C,$ we impose the following assumptioon:
\begin{assm}\label{Ass:coin}
\begin{enumerate}
\item For each $x\in\Z^d,$ 1 is a simple eigenvalue of $C(x)$, i.e., $\mathrm{dim ker}(C(x)-1)=1.$
\item There exists two self-adjoint and unitary matrices $C_{0}$ and $C_{1}$ such that
\begin{align*}
C(x)=\begin{cases} C_{1}, \ &x\in\Z^d\setminus\{0\}, \\ C_{0}, \ & x=0.\end{cases}
\end{align*}
\end{enumerate}
\end{assm}
By the first part of Assumption \ref{Ass:coin}, for each $x\in\Z^d,$ we can take a unit vector $\chi(x)$ as follows:
\begin{align*}
\chi(x)=\begin{bmatrix}\chi_{1}(x) \\ \vdots \\ \chi_{d}(x)\end{bmatrix}\in\mathrm{ker}(C(x)-1),\quad \chi_{j}(x)=\begin{bmatrix}\chi_{j1}(x) \\ \chi_{j2}(x)\end{bmatrix}\in \C^{2},\quad (j=1, \dots d).
\end{align*}
From the first part of Assumption \ref{Ass:coin} and the spectral decomposition of $C(x)$, we have $C(x)=2|\chi(x)\rangle\langle\chi(x)|-1.$ Moreover, the second part of Assumption \ref{Ass:coin} implies that $\chi$ has a form of
\begin{align*}
\chi(x)=\begin{cases}
\Phi=\begin{bmatrix}\Phi_{1}\\ \vdots \\ \Phi_{d}\end{bmatrix},\ \Phi_{j}=\begin{bmatrix}\Phi_{j1} \\ \Phi_{j2}\end{bmatrix}\in \C^2,\quad (j=1, \dots, d),\quad x\in\Z^{d}\setminus\{0\},
\\
\Omega=\begin{bmatrix}\Omega_{1}\\ \vdots \\ \Omega_{d}\end{bmatrix},\quad \Omega_{j}=\begin{bmatrix}\Omega_{j1} \\ \Omega_{j2}\end{bmatrix}\in \C^2,\quad (j=1, \dots, d),\quad x=0.
\end{cases}
\end{align*}
The condition $\mathrm{dimKer}(C(x)-1)$ is needed to construct a coisometry from $\ell^{2}(\Z^{d};\C^{2d})$ to $\ell^{2}(\Z^d;\C^d)$ and to apply the spectral mapping theorem \cite{SS}. 
\begin{assm}\label{Ass:FFS2}
Following conditions hold:
\begin{enumerate}
\item $\Phi_{j}\cdot (\sigma_{1}\Omega_{j}):=\Phi_{j1}\Omega_{j2}+\Phi_{j2}\Omega_{j1}\neq 0$ for all $j=1, \dots, d$,
\\
\item $\langle \Phi_{l}, \sigma_{+}\Omega_{l}\rangle_{\C^{2}}\neq 0$ for some $l=1, \dots, d,$
\end{enumerate}
where 
\begin{align*}
\sigma_{1}:=\begin{bmatrix} 0 & 1 \\ 1 & 0\end{bmatrix},\quad \sigma_{+}:=\begin{bmatrix}0 & 1 \\ 0 & 0\end{bmatrix}.
\end{align*}
\end{assm}
We introduce the following quantities:
\begin{align*}
a_{\Omega}(p):=\displaystyle\sum_{j=1}^{d}p_{j}\langle \Omega_{j}, \sigma_{3}\Omega_{j}\rangle_{\C^{2}},\quad a_{\Phi}(p):=\displaystyle\sum_{j=1}^{d}p_{j}\langle \Phi_{j}, \sigma_{3}\Phi_{j}\rangle_{\C^{2}},
\end{align*}
where,
\begin{align*}
\sigma_{3}:=\begin{bmatrix}1 & 0 \\ 0 & -1\end{bmatrix}.
\end{align*}
\begin{assm}\label{Ass:FFS3}
It follows that $a_{\Omega}(p_{0})\neq a_{\Phi}(p_{0})$ for some $p_{0}\in \{-1, 1\}^{d}.$
\end{assm}
\begin{rem}
In $d=1$, Assumption \ref{Ass:FFS2} and Assumption \ref{Ass:FFS3} are not compartible. For $d=1$, see \cite{FFS1}.
\end{rem}
To explain the theorem, for $l\in\{1, \dots, n\}$ stated in Assumption \ref{Ass:FFS2}, we set 
\begin{align*}
D_{l}:=\{(p, q)\in D|\ p_{l}q_{l}\neq 0\}.
\end{align*}
\begin{thm}\cite{FFS2}\label{Thm:FFS}
Let $d\ge 2$ and we assume Assumption \ref{Ass:coin}, \ref{Ass:FFS2} and \ref{Ass:FFS3}. Then,  there exists $\delta>0$ such that for any $(p, q)\in D_{l}$ with $\|(p, q)-(p_{0}, 0)\|_{\R^d\times \C^d},$ $\sigma_{\mathrm{d}}(U)\neq\emptyset.$ 
\end{thm}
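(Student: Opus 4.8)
The plan is to realize $U$ as a finite-rank perturbation of the translation-invariant walk without defect, reduce the eigenvalue equation by a Birman--Schwinger argument to the vanishing of a small determinant, and then perturb off the decoupled configuration $(p_{0},0)$, where that determinant is explicit. Write $U=SC$ and set $U_{1}:=SC_{1}$, the walk with $C(x)\equiv C_{1}$. From Assumption \ref{Ass:coin} and $C(x)=2|\chi(x)\rangle\langle\chi(x)|-1$ one gets $C-C_{1}=\big(2|\Omega\rangle\langle\Omega|-2|\Phi\rangle\langle\Phi|\big)\delta_{x,0}$, so $B:=U-U_{1}=S(C-C_{1})$ has rank at most $2$ and is supported at the origin. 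Being compact, $B$ leaves the essential spectrum unchanged; and since $U_{1}$ is translation invariant it has no discrete spectrum (a nonzero finite-dimensional translation-invariant subspace of $\ell^{2}(\Z^{d})$ would have Fourier transform supported on a finite set, hence be non-$\ell^{2}$), so $\sigma_{\mathrm{ess}}(U)=\sigma(U_{1})=\overline{\bigcup_{k}\sigma(\widehat{S}(k)C_{1})}$, where $\widehat{S}(k)$ is the $2d\times2d$ symbol of $S$ on $[-\pi,\pi)^{d}$.

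Let $P_{0}$ be the projection onto functions supported at $x=0$; since $C-C_{1}=(C-C_{1})P_{0}$ has range in $\mathrm{Ran}\,P_{0}\simeq\C^{2d}$, for $\lambda\notin\sigma(U_{1})$ the equation $(U-\lambda)\psi=0$ with $\psi\neq0$ is equivalent, via $\psi=-(U_{1}-\lambda)^{-1}B\psi$ and restriction to the origin, to the existence of $v\in\C^{2d}\setminus\{0\}$ with
\begin{equation*}
\big(I+P_{0}(U_{1}-\lambda)^{-1}S(C-C_{1})\big)v=0 .
\end{equation*}
As $C-C_{1}$ has rank at most $2$, the Sylvester identity turns solvability into the vanishing of a determinant $F(\lambda;p,q)$ of a matrix of size at most $2$, holomorphic in $\lambda$ off $\sigma(U_{1})$ and continuous in $(p,q)$; thus it suffices to produce, for $(p,q)\in D_{l}$ near $(p_{0},0)$, a zero of $F(\cdot\,;p,q)$ lying outside $\sigma(U_{1}(p,q))$.

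At $(p,q)=(p_{0},0)$ the shift $S$ collapses to the $k$-independent matrix $S_{p_{0}}:=\bigoplus_{j}(p_{0})_{j}\sigma_{3}$, so $\sigma(U_{1}(p_{0},0))=\sigma(S_{p_{0}}C_{1})$ is a finite subset of $S^{1}$, and as $(p,q)\to(p_{0},0)$ the sets $\sigma(U_{1}(p,q))$ converge to it in Hausdorff distance, opening fixed spectral gaps. A direct computation using $I+(S_{p_{0}}C_{1}-\lambda)^{-1}S_{p_{0}}(C_{0}-C_{1})=(S_{p_{0}}C_{1}-\lambda)^{-1}(S_{p_{0}}C_{0}-\lambda)$ gives
\begin{equation*}
F(\lambda;p_{0},0)=\frac{\det(S_{p_{0}}C_{0}-\lambda)}{\det(S_{p_{0}}C_{1}-\lambda)} ,
\end{equation*}
which vanishes precisely on $\sigma(S_{p_{0}}C_{0})\setminus\sigma(S_{p_{0}}C_{1})$. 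Here Assumptions \ref{Ass:FFS2} and \ref{Ass:FFS3} enter: using $C_{0}\Omega=\Omega$ and $C_{1}\Phi=\Phi$ one identifies $a_{\Omega}(p_{0})=\langle\Omega,S_{p_{0}}C_{0}\Omega\rangle$ and $a_{\Phi}(p_{0})=\langle\Phi,S_{p_{0}}C_{1}\Phi\rangle$, and the conditions $a_{\Omega}(p_{0})\neq a_{\Phi}(p_{0})$, $\Phi_{j}\cdot(\sigma_{1}\Omega_{j})\neq0$ and $\langle\Phi_{l},\sigma_{+}\Omega_{l}\rangle\neq0$ are tailored so that, once the shift in the $l$-th direction is switched on, a zero of $F(\cdot\,;p,q)$ detaches from $\sigma(U_{1}(p,q))$ --- either as an eigenvalue already present in $\sigma(S_{p_{0}}C_{0})\setminus\sigma(S_{p_{0}}C_{1})$, or, as the restriction $(p,q)\in D_{l}$ suggests, as a state bifurcating from a threshold of $S_{p_{0}}C_{1}$ at $q_{l}=0$.

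To conclude in the cleaner case, fix $\lambda_{0}\in\sigma(S_{p_{0}}C_{0})\setminus\sigma(S_{p_{0}}C_{1})$. By the Hausdorff convergence there are a disc $\Delta$ about $\lambda_{0}$ and a neighborhood $\mathcal{N}$ of $(p_{0},0)$ such that $\Delta$ lies in a gap of $\sigma_{\mathrm{ess}}(U)=\sigma(U_{1}(p,q))$ for every $(p,q)\in\mathcal{N}$; then $F(\cdot\,;p,q)$ is holomorphic on $\Delta$ and converges uniformly there to $F(\cdot\,;p_{0},0)$, so Rouch\'e's theorem --- or the implicit function theorem when $\lambda_{0}$ is a simple zero, which the same hypotheses should secure --- yields a zero $\lambda(p,q)\in\Delta$ for $(p,q)\in D_{l}\cap\mathcal{N}'$ with $\mathcal{N}'\subset\mathcal{N}$ a smaller neighborhood; since $\lambda(p,q)$ is then an eigenvalue of $U$ outside $\sigma_{\mathrm{ess}}(U)$, it lies in $\sigma_{\mathrm{d}}(U)$, and $\delta$ is the size of $\mathcal{N}'$. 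I expect the real obstacle to be the third paragraph: extracting from Assumptions \ref{Ass:FFS2}--\ref{Ass:FFS3} the precise linear-algebra statement that a zero of $F$ lies strictly outside $\sigma(U_{1}(p,q))$ for $(p,q)\in D_{l}$ near $(p_{0},0)$, and, in the threshold scenario, replacing the Rouch\'e step by a quantitative analysis of $F$ near the relevant point of $\sigma(S_{p_{0}}C_{1})$ that tracks the zero as $q_{l}$ grows from $0$.
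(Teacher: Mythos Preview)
The paper does not prove this theorem: it is quoted from \cite{FFS2}, as the citation attached to the theorem header indicates, and no argument is supplied here. Its role in the present paper is purely to provide a concrete class of examples in which $\sigma_{\mathrm d}(U)\neq\emptyset$, so that Assumption~\ref{Ass:main}(1) is satisfied and Theorem~\ref{thm:main} can be applied to yield Theorem~\ref{Thm:exp} and Corollary~\ref{Coro:FFS}. There is therefore no proof in this paper against which to compare your attempt.

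As a standalone sketch your outline is reasonable in spirit --- a rank-$\le 2$ Birman--Schwinger reduction followed by a Rouch\'e/implicit-function argument off the degenerate configuration $(p_0,0)$ --- but, as you yourself acknowledge in the last paragraph, it is not a proof. You have not established that Assumptions~\ref{Ass:FFS2} and~\ref{Ass:FFS3} actually force a zero of $F(\cdot\,;p,q)$ to sit strictly outside $\sigma(U_1(p,q))$; your ``cleaner case'' presupposes $\sigma(S_{p_0}C_0)\setminus\sigma(S_{p_0}C_1)\neq\emptyset$, which is not an immediate consequence of $a_\Omega(p_0)\neq a_\Phi(p_0)$; and the threshold-bifurcation scenario is left entirely programmatic. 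The paper's remark that the condition $\dim\ker(C(x)-1)=1$ is needed ``to construct a coisometry \dots\ and to apply the spectral mapping theorem \cite{SS}'' suggests that the argument in \cite{FFS2} goes through the Segawa--Suzuki spectral mapping, reducing the problem to a self-adjoint discriminant operator on $\ell^2(\Z^d)$, rather than via a direct Birman--Schwinger analysis of $U$ itself. If you want a complete proof you should consult \cite{FFS2}.
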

We introduce the moduls of position operator as a non-negative self-adjoint operator $A$ which appeared in Assumption \ref{Ass:main}:
\begin{align*}
\mathrm{Dom}(|Q|)&:=\left\{u\in\mathcal{H}|\ \displaystyle\sum_{x\in\Z^d} |x|^2\|u(x)\|_{\C^{2d}}<\infty\right\},
\\
 (|Q|u)(x)&:=|x|u(x),\quad u\in \mathrm{Dom}(|Q|), \quad x\in\Z^d.
\end{align*}
Then, for any $0\le R_{1}< R_{2},$ and $u\in \mathrm{Ran}E_{|Q|}([R_{1}, R_{2})), $ we have $Uu\in\mathrm{Ran}E_{|Q|}([R_{1}-1, R_{2}+1)).$ Thus, we can choose the constant $b$ which appeared in Assumption \ref{Ass:main} as $b=1.$ By Theorem \ref{thm:main}, we get the following result:
\begin{thm}\label{Thm:exp}
For any $\lambda\in\sigma_{\mathrm{d}}(U)$ and $\psi\in\mathrm{Ker}(U-\lambda)\setminus\{0\},$ $e^{\delta|Q|}\psi\in\mathcal{H}$ for any $\delta>0$ with $2\sinh \delta<d(\lambda).$ 
\end{thm}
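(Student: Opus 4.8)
The plan is to deduce Theorem \ref{Thm:exp} as a direct corollary of the abstract Theorem \ref{thm:main}, so the real work is only to verify that the multi-dimensional quantum walk $U=SC$ on $\mathcal{H}=\ell^{2}(\Z^{d};\C^{2d})$, together with $A=|Q|$, satisfies Assumption \ref{Ass:main}. The first hypothesis, $\sigma_{\mathrm{d}}(U)\neq\emptyset$, is exactly the content of Theorem \ref{Thm:FFS} (under Assumptions \ref{Ass:coin}, \ref{Ass:FFS2}, \ref{Ass:FFS3} and for $(p,q)\in D_l$ near $(p_0,0)$), so I would simply cite it. The second hypothesis is the finite propagation of $U$ with respect to $|Q|$, and this is where a short argument is needed.

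For the finite propagation statement, I would observe that $E_{|Q|}([R_1,R_2))$ is the operator of multiplication by the indicator function $\I_{\{x\in\Z^d:\,R_1\le|x|<R_2\}}$, so $\mathrm{Ran}\,E_{|Q|}([R_1,R_2))$ consists precisely of those $u\in\mathcal{H}$ supported on the annulus $\{R_1\le|x|<R_2\}$. The coin operator $C$ is a multiplication operator, hence preserves supports exactly. The shift $S=\bigoplus_{j=1}^d S_j$ has entries $p_j$, $-p_j$ (multiplication operators, support-preserving) and $q_jL_j$, $(q_jL_j)^{\ast}=\bar q_jL_j^{\ast}$, where $(L_ju)(x)=u(x+e_j)$ and $(L_j^{\ast}u)(x)=u(x-e_j)$ translate supports by the unit vectors $\mp e_j$. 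Since $\||x\pm e_j|-|x|\|\le |e_j|=1$ for every $x\in\Z^d$, applying $S$ moves the support of a function supported in $\{R_1\le|x|<R_2\}$ into $\{R_1-1\le|x|<R_2+1\}$. Composing, $U=SC$ (or $CS$, whichever convention is used) maps $\mathrm{Ran}\,E_{|Q|}([R_1,R_2))$ into $\mathrm{Ran}\,E_{|Q|}([R_1-1,R_2+1))$, which is Definition \ref{Def:propagate} with $b=1$.

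With both parts of Assumption \ref{Ass:main} verified and $b=1$, Theorem \ref{thm:main} applies verbatim: for every $\lambda\in\sigma_{\mathrm{d}}(U)$ and every $\psi\in\mathrm{Ker}(U-\lambda)\setminus\{0\}$, one has $e^{\delta|Q|}\psi\in\mathcal{H}$ for any $\delta>0$ with $2\sinh(\delta\cdot 1)<d(\lambda)$, i.e. $2\sinh\delta<d(\lambda)$, which is exactly the assertion of Theorem \ref{Thm:exp}. I do not anticipate any genuine obstacle; the only point requiring a little care is bookkeeping the one-step spread of the shift in terms of the \emph{radial} variable $|x|$ rather than the coordinates, but the elementary bound $\big||x\pm e_j|-|x|\big|\le 1$ handles this cleanly, and the half-open intervals in Definition \ref{Def:propagate} are compatible with taking $b=1$ since the annuli $\{R_1-1\le|x|<R_2+1\}$ still tile $[0,\infty)$ appropriately.
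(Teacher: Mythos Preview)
Your proposal is correct and follows exactly the paper's approach: verify that $U$ finitely propagates with respect to $|Q|$ with $b=1$ (the paper states this in one line, while you spell out the argument via the support-preservation of $C$ and the one-step spread of $S$ together with $\big||x\pm e_j|-|x|\big|\le 1$), and then invoke Theorem \ref{thm:main}. One tiny remark: Theorem \ref{Thm:exp} as stated is vacuous when $\sigma_{\mathrm d}(U)=\emptyset$, so citing Theorem \ref{Thm:FFS} for Assumption \ref{Ass:main}(1) is not strictly necessary for the proof---it only guarantees that the conclusion is nontrivial.
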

As a corollary of Theorem \ref{Thm:exp}, we can derive the pointwise estimate:
\begin{cor}\label{Coro:FFS}
Under the same assumption of Theorem \ref{Thm:FFS}, for any $\delta >0$ with $2\sinh \delta <d(\lambda)$, there exists $C_{\delta}>0$ such that
for any $x\in \Z^{d},$ it follows that
\begin{align*}
\|\psi(x)\|_{\C^{2d}}\le C_{\delta}e^{-\delta |x|}.
\end{align*} 
\end{cor}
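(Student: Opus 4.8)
The plan is to deduce Corollary~\ref{Coro:FFS} directly from Theorem~\ref{Thm:exp} by a simple Sobolev-type pointwise bound for $\ell^2$ functions. Under the hypotheses of Theorem~\ref{Thm:FFS}, the assumptions of Theorem~\ref{thm:main} are met with $A=|Q|$ and $b=1$, so for any eigenvector $\psi\in\mathrm{Ker}(U-\lambda)\setminus\{0\}$ and any $\delta>0$ with $2\sinh\delta<d(\lambda)$ we have $e^{\delta|Q|}\psi\in\mathcal{H}$. Write $g:=e^{\delta|Q|}\psi\in\ell^2(\Z^d;\C^{2d})$, so that $\psi(x)=e^{-\delta|x|}g(x)$ for every $x\in\Z^d$.

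The key step is then elementary: for each fixed $x\in\Z^d$,
\begin{align*}
\|\psi(x)\|_{\C^{2d}}=e^{-\delta|x|}\|g(x)\|_{\C^{2d}}\le e^{-\delta|x|}\Bigl(\sum_{y\in\Z^d}\|g(y)\|_{\C^{2d}}^2\Bigr)^{1/2}=\|e^{\delta|Q|}\psi\|\,e^{-\delta|x|},
\end{align*}
since a single term of a convergent sum of squares is bounded by the full $\ell^2$ norm. Hence one may simply take $C_{\delta}:=\|e^{\delta|Q|}\psi\|$, which is finite and positive by Theorem~\ref{Thm:exp}, and the claimed bound $\|\psi(x)\|_{\C^{2d}}\le C_{\delta}e^{-\delta|x|}$ follows for all $x\in\Z^d$.

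There is essentially no obstacle here: the only subtlety is the logical dependence, namely that Theorem~\ref{Thm:FFS} (together with Assumptions~\ref{Ass:coin}, \ref{Ass:FFS2}, \ref{Ass:FFS3}) guarantees $\sigma_{\mathrm{d}}(U)\neq\emptyset$, which is exactly what is needed to invoke Theorem~\ref{thm:main}/Theorem~\ref{Thm:exp}; one should also note that $U$ as constructed (a composition of the block shift $S$, built from the finite-range operators $L_j$, with the multiplication operator $C$) indeed finitely propagates with respect to $|Q|$ with $b=1$, as already observed in the text preceding Theorem~\ref{Thm:exp}. Aside from recording that $C_{\delta}$ may be chosen as $\|e^{\delta|Q|}\psi\|$ (and in particular depends on $\psi$), the proof is immediate.
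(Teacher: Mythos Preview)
Your proof is correct and follows essentially the same route as the paper: both arguments invoke Theorem~\ref{Thm:exp} to get $e^{\delta|Q|}\psi\in\ell^2(\Z^d;\C^{2d})$ and then observe that an $\ell^2$ sequence has bounded entries, yielding the pointwise bound. The only cosmetic difference is the choice of constant---the paper takes $C_\delta=\sup_{x\in\Z^d}e^{\delta|x|}\|\psi(x)\|_{\C^{2d}}$ while you take the (larger but still finite) $C_\delta=\|e^{\delta|Q|}\psi\|$.
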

\begin{proof}Since $\psi\in D(e^{\delta |Q|}),$ $\{e^{\delta |x|}\|\psi(x)\|_{\C^{2d}}\}_{x\in\mathbb{Z}^{d}}$ is bounded. We choose a constant $C_{\delta}>0$ as $C_{\delta}:=\sup_{x\in \mathbb{Z}^{d}}e^{\delta|x|}\|\psi(x)\|_{\C^{2d}}$. Then, it follows that
\begin{align*}
\|\psi(x)\|_{\C^{2d}}=e^{\delta|x|} e^{-\delta |x|} \|\psi(x)\|_{\C^{2d}}     \le  C_{\delta}e^{-\delta |x|}.
\end{align*}
\end{proof}
\section*{Acknowledgments}
The author acknowledges support by JSPS KAKENHI Grant Number 23K03224. This work was partially supported by the Research Institute for Mathematical Sciences, an International Joint Usage/Research Center located in Kyoto University. The author thanks the anonymous referee for careful reading and fruitful comments.


\end{document}